\documentclass[11pt]{article}
\usepackage[left=1in,top=1in,right=1in,bottom=1in]{geometry}
\usepackage{times}

\usepackage{amsmath}
\usepackage{amssymb}
\usepackage{amsthm}
\usepackage{graphicx}
\usepackage{verbatim}
\usepackage{url}
\usepackage{algorithm}
\usepackage[noend]{algpseudocode}
\allowdisplaybreaks

\usepackage{siunitx}
\usepackage[T1]{fontenc}

\newtheorem{proposition}{Proposition}
\newtheorem{corollary}{Corollary}

\allowdisplaybreaks

\begin{document}

\author{
Sam Ganzfried\\
Ganzfried Research, Cornell University\\
\texttt{sam.ganzfried@gmail.com}
}

\date{\vspace{-5ex}}

\title{Variable Bound Tightening for Nash Equilibrium Computation in Multiplayer Imperfect-Information Games} 

\maketitle

\begin{abstract}
There has been significant recent progress in algorithms for approximation of Nash equilibrium in large two-player zero-sum imperfect-information games and exact computation of Nash equilibrium in multiplayer strategic-form games. While counterfactual regret minimization and fictitious play are scalable to large games and have convergence guarantees in two-player zero-sum games, they do not guarantee convergence to Nash equilibrium in multiplayer games. Recently, an approach has been presented for exact computation of Nash equilibrium in multiplayer imperfect-information games that solves a quadratically constrained program based on a nonlinear complementarity problem formulation derived from the sequence-form game representation. This formulation was solved using Gurobi's nonconvex quadratic solver, which employs spatial branch-and-bound to iteratively refine variable bounds by solving convex relaxations of bilinear terms via McCormick envelopes. During presolve, Gurobi introduces auxiliary variables and, in some cases, binary variables, leading to an internal MIQCP reformulation. This approach was demonstrated to outperform prior algorithms from the Gambit software suite and quickly solve three-player Kuhn poker after removal of dominated actions; however, the algorithm was not able to solve the full version of the game within 24 hours. In this paper, we derive finite bounds on slack and multiplier variables in the nonlinear complementarity formulation. These bounds strengthen the convex relaxations used within spatial branch-and-bound and lead to substantial computational improvements. We demonstrate the impact of the proposed bounds on exact Nash equilibrium computation in three-player Kuhn poker.
\end{abstract}

\section{Nonlinear complementarity program formulation}
\label{se:ncp}
In this section we review the derivation of a nonlinear complementarity program formulation (NCP) for multiplayer Nash equilibrium in imperfect-information games~\cite{Ganzfried26:Quadratic}. Imperfect-information games are modeled using extensive-form game trees, where play proceeds from the root node to a terminal leaf node at which point all players receive payoffs. Each non-terminal node has an associated player (possibly \emph{chance}) that makes the decision at that node. These nodes are partitioned into \emph{information sets}, where the player whose turn it is to move cannot distinguish among the states in the same information set. Therefore, in any given information set, a player must choose actions with the same distribution at each state contained in the information set. If no player forgets information that they previously knew, we say that the game has \emph{perfect recall}. A (mixed) \emph{strategy} for player $i,$ $\sigma_i \in \Sigma_i,$ is a function that assigns a probability distribution over all actions at each information set belonging to $i$. 

Rather than operate on the full pure strategy space, which has size exponential in the size of the game tree, the \emph{sequence-form representation} works with sequences of actions along trajectories from the root node to leaf nodes~\cite{Koller94:Fast}. For player 1, the matrix $\mathbf{E}$ is defined where each row corresponds to an information set (including an initial row for the ``empty'' information set), and each column corresponds to an action sequence (including an initial row for the ``empty'' action sequence). In the first row of $\mathbf{E}$ the first element is 1 and all other elements are 0; subsequent rows have -1 for the entries corresponding to the action sequence leading to the root of the information set, and 1 for all actions that can be taken at the information set (and 0 otherwise). Thus $\mathbf{E}$ has dimension $c_1 \times d_1$, where $c_i$ is the number of information sets for player $i$ and $d_i$ is the number of action sequences for player $i$. Matrix $\mathbf{F}$ is defined analogously for player 2. The vector $\mathbf{e}$ is defined to be a column vector of length $c_1$ with 1 in the first position and 0 in other entries, and vector $\mathbf{f}$ is defined with length $c_2$ analogously. The matrix $\mathbf{A}$ is defined with dimension $d_1 \times d_2$ where entry $A_{ij}$ gives the payoff for player 1 when player 1 plays action sequence $i$ and player 2 plays action sequence $j$ multiplied by the probabilities of chance moves along the path of play. The matrix $\mathbf{B}$ of player 2's payoffs is defined analogously. In zero-sum games $\mathbf{B} = -\mathbf{A}.$

Given these matrices we can solve one of two linear programming problems to compute a Nash equilibrium in zero-sum extensive-form games~\cite{Koller94:Fast}. In the first formulation the primal variables $\mathbf{x}$ correspond to player 1's mixed strategy while the dual variables correspond to player 2's strategy. In the second formulation, which is the dual problem of the first formulation, the primal decision variables $\mathbf{y}$ correspond to player 2's strategy while the dual variables correspond to player 1's strategy.

\[
\begin{array}{rrl} 
&\max_{\mathbf{x},\mathbf{q}}& -\mathbf{q}^T \mathbf{f} \\ 
&\mbox{s.t.}& \mathbf{x}^T (-\mathbf{A}) - \mathbf{q}^T \mathbf{F} \leq \mathbf{0} \\
& & \mathbf{x}^T \mathbf{E}^T = \mathbf{e}^T \\
& & \mathbf{x} \geq \mathbf{0}\\
\end{array} 
\]

\[
\begin{array}{rrl} 
&\min_{\mathbf{y},\mathbf{p}}& \mathbf{e}^T \mathbf{p} \\ 
&\mbox{s.t.}& -\mathbf{A} \mathbf{y} + \mathbf{E}^T \mathbf{p} \geq \mathbf{0} \\
& & -\mathbf{F} \mathbf{y} = -\mathbf{f} \\
& & \mathbf{y} \geq \mathbf{0}\\
\end{array} 
\]

For two-player non-zero-sum games, the problem of finding a Nash equilibrium is the feasibility problem of finding
$\mathbf{x},$ $\mathbf{y},$ $\mathbf{p},$ $\mathbf{q}$, such that~\cite{Koller94:Fast}:

\[
\begin{array}{rrl} 
-\mathbf{A} \mathbf{y} + \mathbf{E}^T \mathbf{p} & \geq & \mathbf{0} \\
-\mathbf{B}^T \mathbf{x} + \mathbf{F}^T \mathbf{q} & \geq &\mathbf{0} \\
-\mathbf{E} \mathbf{x} & = &-\mathbf{e} \\
-\mathbf{F} \mathbf{y} & = &-\mathbf{f} \\
\mathbf{x} & \geq &\mathbf{0}\\
\mathbf{y} &\geq &\mathbf{0}\\
\mathbf{x}^T (-\mathbf{A} \mathbf{y} + \mathbf{E}^T \mathbf{p}) & = &0\\
\mathbf{y}^T (-\mathbf{B} \mathbf{x} + \mathbf{F}^T \mathbf{q}) & = &0\\
\end{array} 
\]

The final two constraints are called complementarity slackness conditions (CSC), and the full system is known
as a linear complementarity problem (LCP). It is no longer a linear program since the CSCs involve products of variables.
This LCP can be solved using Lemke's algorithm~\cite{Lemke65:Bimatrix} or the related Lemke-Howson algorithm~\cite{Lemke64:Equilibrium}.
We would like to extend this result to develop a feasibility problem for games with $n > 2$ players. We will start with the case $n = 3.$ The sequence-form representation extends straightforwardly to 3 players. We define matrix $\mathbf{G}$ for player 3 analogously to $\mathbf{E}$ and $\mathbf{F},$ and define vector $\mathbf{g}$ analogously to $\mathbf{e},\mathbf{f}.$ The utility functions can no longer be represented as 2-dimensional matrices. We write $u_1(i,j,k)$ as player 1's utility when player 1 plays action sequence $i,$ player 2 plays action sequence $j$, and player 3 plays action sequence $k.$ We represent player 2 and 3's utilities analogously as $u_2(i,j,k), u_3(i,j,k).$

Consider the problem of player 1 playing a best response when player 2 plays $\mathbf{y}$ and player 3 plays $\mathbf{z}$:

\[
\begin{array}{rrl} 
&\max_{\mathbf{x}} &\sum_i \sum_j \sum_k x_i y_j z_k u_1(i,j,k)  \\ 
&\mbox{s.t.}& \mathbf{E} \mathbf{x} = \mathbf{e} \\
& & \mathbf{x} \geq \mathbf{0}\\
\end{array} 
\]

Let us rewrite this as a convex minimization problem by negating the objective:

\[
\begin{array}{rrl} 
&\min_{\mathbf{x}} &-\sum_i \sum_j \sum_k x_i y_j z_k u_1(i,j,k)  \\ 
&\mbox{s.t.}& \mathbf{E} \mathbf{x} = \mathbf{e} \\
& & \mathbf{x} \geq \mathbf{0}\\
\end{array} 
\]

The Lagrangian is
$$L(\mathbf{x},\boldsymbol{\tau}^1,\mathbf{r}^1) = -\sum_i \sum_j \sum_k x_i y_j z_k u_1(i,j,k) - (\mathbf{E} \mathbf{x} - \mathbf{e})^T \boldsymbol{\tau}^1
- (\mathbf{r}^1)^T \mathbf{x}$$

Defining $\boldsymbol{\lambda}^1 = -\boldsymbol{\tau}^1$:

$$L(\mathbf{x},\boldsymbol{\lambda}^1,\mathbf{r}^1) = -\sum_i \sum_j \sum_k x_i y_j z_k u_1(i,j,k) - (\mathbf{E} \mathbf{x} - \mathbf{e})^T (-\boldsymbol{\lambda}^1) - (\mathbf{r}^1)^T \mathbf{x}$$
$$= -\sum_i \sum_j \sum_k x_i y_j z_k u_1(i,j,k) + (\mathbf{E} \mathbf{x} - \mathbf{e})^T \boldsymbol{\lambda}^1 - (\mathbf{r}^1)^T \mathbf{x}$$

$$\frac{\partial L}{\partial x_i} = -\sum_j \sum_k y_j z_k u_1(i,j,k) + \sum_{j} \lambda^1_j E_{ji} - r^1_i$$
 
So the first-order necessary optimality conditions for player 1's best response problem are:

\[
\begin{array}{rrl} 
\mathbf{E}\mathbf{x} &= \mathbf{e} \\
\mathbf{x} &\ge \mathbf{0} \\
\mathbf{r}^1 &\ge \mathbf{0} \\
-\sum_j \sum_k y_j z_k u_1(i,j,k) + \sum_j \lambda^1_j E_{ji} - r^1_i &= 0 
\quad \text{for all } i \\
x_i r^1_i &= 0 
\quad \text{for all } i
\end{array} 
\]

Since the problem is a convex minimization problem these conditions are sufficient for optimality as well.
Adding in analogous constraints for players 2 and 3, the problem is to find $\mathbf{x},\mathbf{y},\mathbf{z}$,
$\boldsymbol{\lambda}^1,\boldsymbol{\lambda}^2,\boldsymbol{\lambda}^3$, $\mathbf{r}^1,\mathbf{r}^2,\mathbf{r}^3$ such that:

\begin{align}
\label{eq:ncp}
-\sum_j \sum_k y_j z_k u_1(i,j,k) + \sum_{j} \lambda^1_j E_{ji} - r^1_i &= 0 && \text{for all } i \nonumber \\
-\sum_i \sum_k x_i z_k u_2(i,j,k) + \sum_{i} \lambda^2_i F_{ij} - r^2_j &= 0 && \text{for all } j \nonumber \\
-\sum_i \sum_j x_i y_j u_3(i,j,k) + \sum_{j} \lambda^3_j G_{jk} - r^3_k &= 0 && \text{for all } k \nonumber \\
\mathbf{E} \mathbf{x} &= \mathbf{e} \nonumber \\
\mathbf{F} \mathbf{y} &= \mathbf{f} \nonumber \\
\mathbf{G} \mathbf{z} &= \mathbf{g} \nonumber \\
\mathbf{x} &\geq \mathbf{0} \nonumber \\
\mathbf{y} &\geq \mathbf{0}  \\
\mathbf{z} &\geq \mathbf{0} \nonumber \\
\mathbf{r}^1 &\geq \mathbf{0} \nonumber \\
\mathbf{r}^2 &\geq \mathbf{0} \nonumber \\
\mathbf{r}^3 &\geq \mathbf{0} \nonumber \\
x_i r^1_i &= 0 && \text{for all } i \nonumber \\
y_i r^2_i &= 0 && \text{for all } i \nonumber \\
z_i r^3_i &= 0 && \text{for all } i \nonumber
\end{align}

This feasibility program is not a linear program due to several products of variables:
$y_j z_k,$ $x_i z_k$, $x_i y_j$, $x_i r^1_i,$ $y_i r^2_i,$ $z_i r^3_i.$ 
This formulation can be straightforwardly generalized to $n > 3$ players by adding in the original decision constraints,
multiplier sign conditions, Lagrange derivative conditions, and complementary slackness conditions for each player.
The Lagrange derivative conditions will involve products of $n-1$ variables, while the complementary slackness conditions 
still involve products of 2 variables. However, we can still model the Lagrange derivative conditions with quadratic constraints by incrementally defining new product variables. E.g., define $w_{ij} = x_i y_j,$ then $v_{ijk} = w_{ij} z_k$, etc. So for arbitrary $n > 2$, we can define the problem of finding a Nash equilibrium as a nonlinear complementarity problem, which can be modeled as a quadratically-constrained feasibility program. For efficiency we prefer to introduce as few new auxiliary bilinear variables as possible. For example, for four players with variables $\mathbf{x}^1,\mathbf{x}^2,\mathbf{x}^3,\mathbf{x}^4,$ we can create a new bilinear program with only the introduction of new variables $y_{ij} = x^1_i x^2_j$ and $z_{ij} = x^3_i x^4_j.$ We have $x^1_i x^2_j x^3_k = y_{ij} x^3_k,$ $x^1_i x^2_j x^4_m = y_{ij} x^4_m,$ $x^1_i x^3_k x^4_m = x^1_i z_{km},$ $x^2_j x^3_k x^4_m = x^2_j z_{km}.$ 

\section{Tightening variable bounds}
\label{se:bound}
We can directly solve the NCP formulation (\ref{eq:ncp}) in Gurobi using the nonconvex quadratic solver~\cite{Gurobi26:Gurobi}. Gurobi allows us to input the nonconvex bilinear terms that occur in the first three sets of constraints as well as the final complementarity constraints. The prior formulation~\cite{Ganzfried26:Quadratic} used the following variable domains. For the action sequence variables $\mathbf{x}, \mathbf{y}, \mathbf{z},$ the obvious domain is [0,1]. The lower bounds are directly implied by the constraints in the original problem. The upper bounds of 1 follow from the sequence-form constraints (e.g., $\mathbf{E}\mathbf{x} = \mathbf{e}$), which are essentially flow conservation constraints. For the slackness variables all we know is $\mathbf{r}^i \geq \mathbf{0}.$ So the ``natural'' domain to use is $[0,\infty).$ For the unconstrained multiplier variables $\lambda^i_j$ the natural domain is $(-\infty,\infty).$ These were the variable domains used in the prior implementation.

An obvious question to consider is whether these variable domain bounds can be tightened without affecting the solutions, since this could have a dramatic improvement on the performance of spatial branch and bound. Let us start by looking at $r^1_i$ (the other slack variables will be analogous). Purely based on the formulation (\ref{eq:ncp}) without any information, we cannot reduce the upper bound on $r^1_i$, since it depends on the unbounded variables $\lambda^1_j.$ However, we can bound $r_i^1$ by exploiting its interpretation as a dual slack variable in player 1's best-response linear program.

Fix opponent strategies $y,z$, and define
\[
c_i := \sum_{j,k} y_j z_k u_1(i,j,k).
\]

Consider the player 1 best-response linear program
\[
\max_{\mathbf{x}} \ \mathbf{c}^\top \mathbf{x}
\quad \text{s.t.} \quad
\mathbf{E}\mathbf{x}=\mathbf{e},\;
\mathbf{x}\ge \mathbf{0}.
\]

Let
\[
V_1^{\max}
=
\max\{\mathbf{c}^\top\mathbf{x} :
\mathbf{E}\mathbf{x}=\mathbf{e},
\mathbf{x}\ge\mathbf{0}\},
\]

and for each sequence $i$, define
\[
U_i
=
\max\{\mathbf{c}^\top\mathbf{x} :
\mathbf{E}\mathbf{x}=\mathbf{e},
\mathbf{x}\ge\mathbf{0},
\ x_i=1\}.
\]

Let $(\mathbf{x}^*,\boldsymbol{\lambda}^1,\mathbf{r}^1)$ satisfy the corresponding KKT conditions
\[
\mathbf{E}^\top\boldsymbol{\lambda}^1-\mathbf{r}^1=\mathbf{c},
\qquad
\mathbf{r}^1\ge \mathbf{0},
\qquad
\mathbf{E}\mathbf{x}^*=\mathbf{e},
\qquad
\mathbf{x}^*\ge \mathbf{0}.
\]

\begin{proposition}
\label{pr:slack}
For every sequence $i$,
\[
0 \le r_i^1 \le V_1^{\max}-U_i.
\]
\end{proposition}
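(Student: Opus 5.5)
The plan is a direct weak-duality computation. I will exhibit one feasible point that attains $U_i$, and compare its objective value with $V_1^{\max}$ using the KKT identity $\mathbf{c} = \mathbf{E}^\top\boldsymbol{\lambda}^1 - \mathbf{r}^1$. The lower bound $0 \le r_i^1$ is just the stated sign condition $\mathbf{r}^1 \ge \mathbf{0}$, so only the upper bound needs work.

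The argument needs the complementarity conditions $x^*_i r^1_i = 0$ from (\ref{eq:ncp}), which are part of the full KKT system even though they are not displayed above. I will therefore read ``the corresponding KKT conditions'' as including them. Since the best-response problem is a linear program, these conditions make $\mathbf{x}^*$ optimal, so $\mathbf{c}^\top\mathbf{x}^* = V_1^{\max}$. The first step is to evaluate that value through the multipliers:
\[
\mathbf{c}^\top\mathbf{x}^* = (\boldsymbol{\lambda}^1)^\top \mathbf{E}\mathbf{x}^* - (\mathbf{r}^1)^\top\mathbf{x}^* = (\boldsymbol{\lambda}^1)^\top\mathbf{e},
\]
where $(\mathbf{r}^1)^\top\mathbf{x}^* = 0$ by complementarity. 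This gives $V_1^{\max} = \mathbf{e}^\top\boldsymbol{\lambda}^1$.

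The second step is to check that the problem defining $U_i$ is feasible and attains its maximum. By perfect recall, sequence $i$ determines a unique path of player 1's own actions. Set every sequence on that path to $1$, complete the realization plan with arbitrary pure choices at all other information sets, and the result satisfies $\mathbf{E}\mathbf{x}=\mathbf{e}$, $\mathbf{x}\ge\mathbf{0}$, $x_i=1$. The feasible set is a nonempty bounded polytope, since $\mathbf{x}\in[0,1]$ as noted earlier, so a maximizer $\hat{\mathbf{x}}$ exists. The same identity then gives
\[
U_i = \mathbf{c}^\top\hat{\mathbf{x}} = \mathbf{e}^\top\boldsymbol{\lambda}^1 - (\mathbf{r}^1)^\top\hat{\mathbf{x}} \le V_1^{\max} - r^1_i \hat x_i = V_1^{\max} - r^1_i .
\]
The inequality drops all terms except the $i$-th from $(\mathbf{r}^1)^\top\hat{\mathbf{x}}$; this is valid because $\mathbf{r}^1\ge\mathbf{0}$ and $\hat{\mathbf{x}}\ge\mathbf{0}$. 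Rearranging yields $r^1_i \le V_1^{\max} - U_i$.

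The main obstacle is conceptual rather than computational. Without complementarity the claim fails, because $\boldsymbol{\lambda}^1$ could be shifted to make $\mathbf{r}^1$ arbitrarily large, so I must state clearly that $x^*_i r^1_i = 0$ is assumed. The only other point needing care is the feasibility argument for $x_i = 1$, which is where perfect recall is used.
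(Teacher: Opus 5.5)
Your proof is correct and follows the paper's argument. You establish $V_1^{\max}=\mathbf{e}^\top\boldsymbol{\lambda}^1$, write $\mathbf{c}^\top\mathbf{x}=V_1^{\max}-(\mathbf{r}^1)^\top\mathbf{x}$ for feasible $\mathbf{x}$, and drop all but the $i$-th term on plans with $x_i=1$. Your explicit use of the complementarity conditions $x^*_i r^1_i=0$ is exactly what justifies the paper's ``by strong duality'' step: the KKT conditions displayed before the proposition omit complementarity, and on their own they give only $\mathbf{e}^\top\boldsymbol{\lambda}^1\ge V_1^{\max}$.
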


\begin{proof}
By strong duality, the value of the best-response linear program is
\[
V_1^{\max}=\mathbf{e}^\top\boldsymbol{\lambda}^1.
\]

For any feasible realization plan $\mathbf{x}$, we have
\[
\mathbf{c}^\top \mathbf{x}
=
\mathbf{x}^\top(\mathbf{E}^\top\boldsymbol{\lambda}^1-\mathbf{r}^1)
=
(\mathbf{E}\mathbf{x})^\top\boldsymbol{\lambda}^1-(\mathbf{r}^1)^\top \mathbf{x}
=
\mathbf{e}^\top\boldsymbol{\lambda}^1-(\mathbf{r}^1)^\top \mathbf{x}
=
V_1^{\max}-(\mathbf{r}^1)^\top \mathbf{x}.
\]

Fix a sequence $i$. Since sequence $i$ is realizable, there exists at least one feasible realization plan satisfying $x_i=1$. Therefore $U_i$ is well-defined.

Now let $\mathbf{x}$ be any feasible realization plan with $x_i=1$. Since $\mathbf{r}^1\ge \mathbf{0}$ and $\mathbf{x}\ge \mathbf{0}$,
\[
(\mathbf{r}^1)^\top \mathbf{x}
\ge
r_i^1 x_i
=
r_i^1.
\]

Hence
\[
\mathbf{c}^\top \mathbf{x}
=
V_1^{\max}-(\mathbf{r}^1)^\top \mathbf{x}
\le
V_1^{\max}-r_i^1.
\]

Rearranging gives
\[
r_i^1
\le
V_1^{\max}-\mathbf{c}^\top \mathbf{x}.
\]

Taking the maximum over all feasible $\mathbf{x}$ satisfying $x_i=1$ yields
\[
r_i^1
\le
V_1^{\max}-U_i.
\]

The lower bound
\[
r_i^1\ge 0
\]
follows directly from the KKT conditions.
\end{proof}

The quantity $V_1^{\max}-U_i$ represents the loss in best-response value incurred by forcing sequence $i$. Thus, Proposition~\ref{pr:slack} provides a sequence-specific bound that is tighter for nearly optimal sequences and looser for highly suboptimal sequences.

\begin{corollary}
\label{co:slack}
Let
\[
u_1^{\max} := \max_{i,j,k} u_1(i,j,k),
\qquad
u_1^{\min} := \min_{i,j,k} u_1(i,j,k).
\]

Then
\[
0 \le r_i^1 \le u_1^{\max}-u_1^{\min}.
\]
\end{corollary}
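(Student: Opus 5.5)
The plan is to combine Proposition~\ref{pr:slack} with two one-sided estimates,
\[
V_1^{\max}\le u_1^{\max}
\qquad\text{and}\qquad
U_i\ge u_1^{\min},
\]
which together give $r_i^1\le V_1^{\max}-U_i\le u_1^{\max}-u_1^{\min}$. The lower bound $r_i^1\ge 0$ is already part of the KKT conditions, so nothing more is needed for it.

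The key observation I will use is this. For any feasible realization plan $\mathbf{x}$ of player 1, together with the fixed realization plans $\mathbf{y},\mathbf{z}$ of the opponents, the quantity
\[
\mathbf{c}^\top\mathbf{x}=\sum_{i,j,k}x_iy_jz_ku_1(i,j,k)
\]
is player 1's expected payoff under the behavioral profile induced by $(\mathbf{x},\mathbf{y},\mathbf{z})$. This is the standard sequence-form identity from \cite{Koller94:Fast}. The expected payoff is a convex combination, weighted by the leaf-reaching probabilities, of the payoffs at terminal nodes, so it lies between the smallest and largest payoff. I will therefore argue that every feasible $\mathbf{x}$ satisfies
\[
u_1^{\min}\le \mathbf{c}^\top\mathbf{x}\le u_1^{\max}.
\]
To make this concrete, I would first reduce to pure profiles. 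The feasible set $\{\mathbf{E}\mathbf{x}=\mathbf{e},\ \mathbf{x}\ge\mathbf{0}\}$ is a polytope whose vertices are pure realization plans. Also, $\mathbf{c}^\top\mathbf{x}$ is multilinear in $(\mathbf{x},\mathbf{y},\mathbf{z})$, and mixed $\mathbf{y},\mathbf{z}$ are convex combinations of pure plans. Hence it suffices to check the range for pure triples. For a pure triple, the sum collapses to the chance-weighted payoffs of the leaves consistent with the three pure plans, and those weights sum to one.

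With that in hand, the two estimates follow directly. First, $V_1^{\max}$ is a maximum of values $\mathbf{c}^\top\mathbf{x}$, each at most $u_1^{\max}$, so $V_1^{\max}\le u_1^{\max}$. Second, $U_i$ is attained by some feasible $\mathbf{x}$ with $x_i=1$; such a plan exists because every sequence is realizable, as noted in the proof of Proposition~\ref{pr:slack}. That plan has value at least $u_1^{\min}$, so $U_i\ge u_1^{\min}$.

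The step I expect to be the main obstacle is matching the definition of $u_1^{\max},u_1^{\min}$ (the max and min over sequence-triple entries) with the range of the expected payoff. The entries $u_1(i,j,k)$ are chance-weighted and are zero for triples that do not terminate at a leaf. The cleanest route is to read $u_1^{\max}$ and $u_1^{\min}$ as bounds on the terminal payoffs of player 1, so that the convex-combination argument applies verbatim.

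If one insists on the raw entrywise max and min, I would argue as follows. For a pure triple, the collapsed sum equals $\sum_\ell \pi_c(\ell)\,\mathrm{payoff}(\ell)$ over consistent leaves $\ell$, with chance probabilities $\pi_c(\ell)$ summing to one. Each $\mathrm{payoff}(\ell)$ is then bounded using the entries $u_1(i,j,k)=\pi_c(\ell)\,\mathrm{payoff}(\ell)$. I would flag this normalization explicitly in the proof, since it is where the statement's constants need care.
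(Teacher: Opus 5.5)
Your main line is the paper's proof. You combine Proposition~\ref{pr:slack} with $V_1^{\max}\le u_1^{\max}$ and $U_i\ge u_1^{\min}$, both taken from the fact that $\mathbf{c}^\top\mathbf{x}$ is a convex combination of terminal payoffs. Your reduction to pure plans and your remark that $U_i$ is attained by a plan with $x_i=1$ only spell out what the paper asserts in one line. The argument is correct provided $u_1^{\max}$ and $u_1^{\min}$ are the largest and smallest \emph{terminal} payoffs of player~1. That is the reading you call the cleanest route, and the paper's step $u_1^{\min}\le\mathbf{c}^\top\mathbf{x}\le u_1^{\max}$ tacitly needs it too.

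Your fallback for the raw, chance-weighted entries is a genuine error, and it cannot be repaired. From $u_1(i,j,k)=\pi_c(\ell)\,\mathrm{payoff}(\ell)\le u_1^{\max}$ you only get $\mathrm{payoff}(\ell)\le u_1^{\max}/\pi_c(\ell)$. The value of a pure triple is a \emph{sum} of entries, not an average of them, so it is not bounded by $u_1^{\max}$.

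Under that reading the corollary is in fact false. In three-player Kuhn poker each deal has probability $1/24$ and player~1's terminal payoffs lie in $[-2,4]$, so the raw range is $6/24=0.25$. This is exactly the slack bound used in the experiments, just as $7.083=17(\tfrac{1}{6}+\tfrac{1}{4})$ comes from raw entries in Corollary~\ref{co:lambda}. Now fix opponents who always call a bet and always check after a check. With the Ace:
\begin{itemize}
\item player~1's bet sequence has $c=6\cdot 4/24=1$;
\item his check sequence has $c=6\cdot 2/24=\tfrac{1}{2}$;
\item all his later sequences below that check have $c=0$, because those information sets are never reached.
\end{itemize}
Let $\mathbf{x}^*$ bet with the Ace. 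Set $\lambda^1_I=1$ at the Ace root information set and $\lambda^1_J=0$ at the three unreached information sets below the check. This is a valid KKT point with $r^1_{\text{check}}=1-\tfrac{1}{2}=\tfrac{1}{2}>0.25$. Proposition~\ref{pr:slack} is tight here, since $V_1^{\max}-U_{\text{check}}=\tfrac{1}{2}$.

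So drop the fallback and state explicitly that $u_1^{\max},u_1^{\min}$ denote the extreme terminal payoffs (in Kuhn poker, $4$ and $-2$). With that convention your proof, and the paper's, go through.
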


\begin{proof}
Since every expected utility $\mathbf{c}^\top \mathbf{x}$ is a convex combination of terminal
payoffs,
\[
u_1^{\min}
\le
\mathbf{c}^\top \mathbf{x}
\le
u_1^{\max}
\]
for every feasible realization plan $\mathbf{x}$. Therefore,
\[
V_1^{\max}
\le
u_1^{\max},
\qquad
U_i
\ge
u_1^{\min}.
\]

Applying Proposition~\ref{pr:slack} yields
\[
0
\le
r_i^1
\le
V_1^{\max}-U_i
\le
u_1^{\max}-u_1^{\min}.
\]
\end{proof}

Although weaker than the sequence-specific bound of Proposition~\ref{pr:slack}, the bound
in Corollary~\ref{co:slack} is trivial to compute and remains significantly tighter than
the original domain $0 \le r_i^1 < \infty.$ This tightening leads to stronger convex relaxations 
of the bilinear constraints, which can significantly improve the efficiency of spatial 
branch-and-bound. We can obtain similar bounds for the other slack variables analogously. 

While the multiplier vector $\boldsymbol{\lambda}^1$ need not be unique, valid finite bounds
can still be obtained. Unlike Proposition~\ref{pr:slack}, which yields sequence-specific
bounds that depend on the opponent realization plans, we seek bounds that are
valid globally and therefore independent of the unknown equilibrium strategies.

Define
\[
M_1
:=
\max_{i,j,k}
|u_1(i,j,k)|,
\]
and
\[
R_1
:=
u_1^{\max}-u_1^{\min}.
\]

Since
\[
c_\sigma
=
\sum_{j,k}
y_j z_k u_1(\sigma,j,k),
\]
and $\mathbf{y},\mathbf{z}$ are realization plans, $c_\sigma$ is a convex combination of terminal
payoffs. Therefore,
\[
|c_\sigma|
\le
M_1.
\]

\begin{proposition}
\label{pr:lambda}
Let $d_I$ denote the number of player 1 information sets in the
subtree rooted at information set $I$ (including $I$ itself). Then, for every
player 1 information set $I$,
\[
|\lambda^1_I| \le d_I(M_1 + R_1).
\]

Moreover,
\[
u_1^{\min} \le \lambda^1_{\emptyset} \le u_1^{\max}.
\]
\end{proposition}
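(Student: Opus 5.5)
We will prove the bounds by backward induction on the information-set tree of player 1, using the KKT equations $\mathbf{E}^\top\boldsymbol{\lambda}^1-\mathbf{r}^1=\mathbf{c}$ column by column.

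\emph{Structure of a column of $\mathbf{E}$.} Consider a sequence $\sigma = (I,a)$, obtained by extending into information set $I$ with action $a$. Its column in $\mathbf{E}$ has a $+1$ in row $I$ and a $-1$ in each row $J$ whose parent sequence is $\sigma$, and zeros elsewhere. The empty sequence has $+1$ in the empty row and $-1$ in each root information set. The KKT equation for $\sigma$ therefore reads
\[
\lambda^1_I=c_\sigma+r^1_\sigma+\sum_{J:\,\text{parent}(J)=\sigma}\lambda^1_J .
\]

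\emph{Induction from the leaves upward.} We bound $\lambda^1_I$ for each information set $I$ in the order leaves first.

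\begin{itemize}
\item \textbf{Base case.} If $I$ has no descendant information sets, the sum is empty for every action $a$. Then $\lambda^1_I=c_\sigma+r^1_\sigma$. Using $|c_\sigma|\le M_1$ and, by Corollary~\ref{co:slack}, $0\le r^1_\sigma\le R_1$, we get $|\lambda^1_I|\le M_1+R_1=d_I(M_1+R_1)$.
\item \textbf{Inductive step.} Fix any single action $a$ at $I$. The child sets $J$ of $(I,a)$ are disjoint subtrees inside the subtree of $I$, so $\sum_J d_J\le d_I-1$. The triangle inequality and the inductive hypothesis then give
\[
|\lambda^1_I|\le M_1+R_1+\sum_J d_J(M_1+R_1)\le d_I(M_1+R_1).
\]
\end{itemize}

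Only one action's equation is needed; every action's equation yields the same $\lambda^1_I$, so any choice works. This is the key point that keeps $d_I$ counting information sets rather than sequences.

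\emph{Bound on the empty multiplier.} For $\lambda^1_\emptyset$ we use duality instead of the crude induction. Strong duality for the best-response LP, with objective $\max \mathbf{c}^\top\mathbf{x}$ and constraint $\mathbf{E}\mathbf{x}=\mathbf{e}$, was used in Proposition~\ref{pr:slack}. It gives
\[
V_1^{\max}=\mathbf{e}^\top\boldsymbol{\lambda}^1=\lambda^1_\emptyset,
\]
since $\mathbf{e}$ has a single $1$ in the first (empty) position. By the argument of Corollary~\ref{co:slack}, every feasible value $\mathbf{c}^\top\mathbf{x}$ lies in $[u_1^{\min},u_1^{\max}]$. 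Because $V_1^{\max}$ is attained by some feasible plan, $u_1^{\min}\le\lambda^1_\emptyset\le u_1^{\max}$.

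We must also check that the strong-duality identity holds for an arbitrary KKT multiplier, since $\boldsymbol{\lambda}^1$ need not be unique. This follows from the identity $\mathbf{c}^\top\mathbf{x}^*=\mathbf{e}^\top\boldsymbol{\lambda}^1-(\mathbf{r}^1)^\top\mathbf{x}^*$ together with complementarity $(\mathbf{r}^1)^\top\mathbf{x}^*=0$ in the NCP (\ref{eq:ncp}).

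\emph{Main obstacle.} The delicate step is the bookkeeping of the sequence-form matrix under perfect recall. We need that each non-empty information set $J$ has a unique parent sequence, so that $\lambda^1_J$ appears with $-1$ in exactly one column. We also need that subtrees of distinct children are disjoint, so the $d_J$ sum to at most $d_I-1$. We would state these as consequences of perfect recall and of the definition of $\mathbf{E}$ before running the induction.
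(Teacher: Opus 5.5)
Your proposal is correct and follows essentially the same route as the paper. Both use the stationarity equation for a single sequence $\sigma=(I,a)$, the bounds $|c_\sigma|\le M_1$ and $0\le r^1_\sigma\le R_1$ from Corollary~\ref{co:slack}, backward induction over the information-set tree, and strong duality to get $\lambda^1_\emptyset=V_1^{\max}$. Your write-up is also slightly more careful in two places: the paper asserts $1+\sum_J d_J=d_I$, which fails when several actions at $I$ lead to further player-1 information sets, whereas your inequality $\sum_J d_J\le d_I-1$ is what actually holds; and your appeal to complementarity $(\mathbf{r}^1)^\top\mathbf{x}^*=0$ makes explicit why an arbitrary NCP multiplier is dual optimal, which the paper leaves implicit in ``by strong duality.''
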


\begin{proof}
Consider a sequence $\sigma \in A(I)$. By the structure of the sequence-form
constraint matrix $\mathbf{E}$, the stationarity condition
\[
\mathbf{E}^\top\boldsymbol{\lambda}^1-\mathbf{r}^1=\mathbf{c}
\]
implies
\[
\lambda_I^1
-
\sum_{J\in\operatorname{child}(\sigma)}
\lambda_J^1
=
c_\sigma+r_\sigma^1.
\]

Equivalently,
\[
\lambda_I^1
=
c_\sigma+r_\sigma^1
+
\sum_{J\in\operatorname{child}(\sigma)}
\lambda_J^1.
\]

By Corollary~\ref{co:slack},
\[
0
\le
r_\sigma^1
\le
R_1.
\]
Since
\[
|c_\sigma|
\le
M_1,
\]
it follows that
\[
|c_\sigma+r_\sigma^1|
\le
M_1+R_1.
\]

We now proceed by backward induction on the information-set tree. For an
information set with no descendants,
\[
|\lambda_I^1|
=
|c_\sigma+r_\sigma^1|
\le
M_1+R_1.
\]

Suppose that for every descendant information set $J$,
\[
|\lambda_J^1|
\le
d_J(M_1+R_1),
\]
where $d_J$ is the number of information sets in the subtree rooted at $J$.
Then
\[
|\lambda_I^1|
\le
(M_1+R_1)
+
\sum_{J\in\operatorname{child}(\sigma)}
d_J(M_1+R_1).
\]

Since the child subtrees are disjoint, the quantity

\[
1+\sum_{J\in\operatorname{child}(\sigma)} d_J
\]

is exactly the number of information sets in the subtree rooted at $I$,
which is $d_I$. Hence

\[
|\lambda_I^1|
\le d_I(M_1+R_1).
\]

Finally, by strong duality,
\[
\mathbf{e}^\top\boldsymbol{\lambda}^1
=
V_1^{\max}.
\]
Since $\mathbf{e}$ has a 1 in the initial row and 0 elsewhere,
\[
\lambda_{\emptyset}^1
=
V_1^{\max}.
\]
Since $V_1^{\max}$ is an expected utility,
\[
u_1^{\min}
\le
V_1^{\max}
\le
u_1^{\max}.
\]
Therefore,
\[
u_1^{\min}
\le
\lambda_{\emptyset}^1
\le
u_1^{\max}.
\]
\end{proof}

\begin{corollary}
\label{co:lambda}
Let $m_1$ denote the total number of player 1 information sets
(including the empty information set). Then, for every player 1
information set $I$,

\[
|\lambda^1_I|
\le m_1(M_1+R_1).
\]
\end{corollary}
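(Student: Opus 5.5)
The corollary follows from Proposition~\ref{pr:lambda} once we check that $d_I \le m_1$ for every player 1 information set $I$; no further structural argument is needed.

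First, recall that $d_I$ counts the player 1 information sets in the subtree rooted at $I$, including $I$. This subtree is a subset of the set of all player 1 information sets. Under the paper's convention, that full set includes the empty information set $\emptyset$ and has $m_1$ elements. Hence $1 \le d_I \le m_1$ for every $I$.

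Second, the factor $M_1 + R_1$ is nonnegative, since $M_1 = \max|u_1| \ge 0$ and $R_1 = u_1^{\max} - u_1^{\min} \ge 0$. Multiplying the inequality $d_I \le m_1$ by this factor preserves it. Combining with Proposition~\ref{pr:lambda} gives
\[
|\lambda^1_I| \le d_I(M_1+R_1) \le m_1(M_1+R_1).
\]

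The root multiplier $\lambda^1_\emptyset$ needs a separate check. Proposition~\ref{pr:lambda} gives $u_1^{\min} \le \lambda^1_\emptyset \le u_1^{\max}$, so $|\lambda^1_\emptyset| \le M_1 \le m_1(M_1+R_1)$, because $m_1 \ge 1$. Alternatively, the root can be handled by the general bound, since its subtree has $d_\emptyset = m_1$.

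The only real obstacle is bookkeeping: making sure the counting convention for $d_I$ is consistent with the definition of $m_1$. In particular, we must confirm that the subtree count never exceeds the total, whether or not $\emptyset$ is counted as a node of the information-set tree. With $\emptyset$ included in $m_1$, this holds trivially.
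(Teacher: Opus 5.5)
Your proposal is correct and matches the paper's proof, which also obtains the bound directly from Proposition~\ref{pr:lambda} using $d_I \le m_1$. Your extra checks, that $M_1+R_1 \ge 0$ and the separate bound on $\lambda^1_\emptyset$, are valid and just make explicit what the paper leaves implicit.
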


\begin{proof}
Since $d_I \le m_1$ for every information set $I$, the result follows
immediately from Proposition \ref{pr:lambda}.
\end{proof}

We can bound $\boldsymbol{\lambda}^2$ and $\boldsymbol{\lambda}^3$ analogously.

\section{Three-player Kuhn poker}
\label{se:kuhn}
Three-player Kuhn poker is a simplified form of limit poker that has been used as a testbed game in the AAAI Annual Computer Poker Competition for several years. There is a single round of betting. Each player first antes a single chip and is dealt a card from a four-card deck that contains a Jack, Queen, King, and Ace. The first player has the option to \emph{bet} a fixed amount of one additional chip (by contrast in \emph{no-limit} games players can bet arbitrary amounts of chips) or to \emph{check} (remain in the hand but not bet an additional chip). When facing a bet, a player can \emph{call} (i.e., match the bet) or \emph{fold} (forfeit the hand). No additional bets or raises beyond the original bet are allowed (while they are allowed in other common poker variants such as Texas hold 'em). If all players but one have folded, then the player who has not folded wins the \emph{pot}, which consists of all chips in the middle. If more than one player has not folded by the end there is a \emph{showdown}, at which point the players reveal their private card and the player with the highest card wins the entire pot (which consists of the initial antes plus all additional bets and calls). The Ace is the highest card, followed by King, Queen, and Jack. As one example of a play of the game, suppose the players are dealt Queen, King, Ace respectively, and player 1 checks, player 2 checks, player 3 bets, player 1 folds, and player 2 calls; then player 3 would win a pot of 5, for a profit of 3 (while player 1 loses 1 and player 2 loses 2).

Note that despite the fact that 3-player Kuhn poker is only a synthetic simplified form of poker and is not actually played competitively, it is still far from trivial to analyze, and contains many of the interesting complexities of popular forms of poker such as Texas hold 'em. First, it is a game of imperfect information, as players are dealt a private card that the other agents do not have access to, which makes the game more complex than a game with perfect information that has the same number of nodes. Despite the size, it is not trivial to compute Nash equilibrium analytically, though recently an infinite family of Nash equilibria has been computed~\cite{Szafron13:Parametrized}. The equilibrium strategies exhibit the phenomena of \emph{bluffing} (i.e., betting with weak hands such as a Jack or Queen), and \emph{slow-playing} (aka \emph{trapping}) (i.e., checking with strong hands such as a King or Ace in order to induce a bet from a weaker hand). The family of equilibria is based on several parameter values, which once selected determine the probabilities for the other portions of the strategies. One can see that randomization and including some probability on trapping and bluffing are essential in order to have a strong and unpredictable strategy. Thus, while this game may appear quite simple at first glance, analysis is still very far from simple, and the game exhibits many of the complexities of far larger games that are played competitively by humans for large amounts of money. 

Prior work notes that several of the Nash equilibrium strategy probabilities must take on ``necessary parameter values'' of 0 or 1 (i.e., certain actions are \emph{dominated})~\cite{Szafron13:Parametrized}. These actions include calling bet with Jack, folding to a bet with Ace, calling a bet with Queen after a bet and a call, and checking with Ace after two players check. We can remove these dominated actions from the game, constructing a reduced game that is guaranteed to contain a Nash equilibrium of the full game as well. The full game has 48 total information sets (16 per player) and 601 total nodes, which includes 288 player decision nodes, 312 terminal nodes, and one chance node. The reduced game after removal of dominated actions has 48 total information sets and 415 total nodes (252 player decision nodes, 162 terminal nodes, and one chance node). So the total number of information sets remains the same while the number of player decision nodes is decreased by 12.5\%.

\section{Experiments}
\label{se:exp}
For all experiments we used the nonconvex quadratic solver from Gurobi version 13.0.2, which guarantees global optimality (up to numerical tolerance)~\cite{Gurobi26:Gurobi}. We used an Intel Core i7-1065G7 processor with base clock speed of 1.30 GHz with 16 GB of RAM under 64-bit Windows 11 (4 cores/8 threads). For all experiments we use the same random seed for Gurobi's solver, and otherwise used Gurobi's default parameters. The previous experiments used Gurobi version 12.0.3 without using a fixed random seed. Previous results showed that Gurobi was able to solve the reduced version of the game (after the removal of dominated actions) in 2.47 seconds, and was unable to solve the full version in 24 hours~\cite{Ganzfried26:Quadratic}. These experiments used all default values for Gurobi's parameters (though did not use the fixed random seed). 

\subsection{Simplified 3-player Kuhn poker}
For the prior experiments on the reduced game, we verified that no player can gain more than $\num{1.4e-17}$ by deviating from the computed solution, indicating that it is essentially an exact Nash equilibrium. We ran the same prior solver for the reduced version of the game again using the fixed random seed and it was able to solve the game in 0.086 seconds, producing strategies in which no player can gain more than $\num{1.8e-13}$ by deviating. Note that we did not make any modifications to the variable bounds or solver parameters; the only change made was the version of Gurobi used (13.0.2 vs. 12.0.3). It appears that the default solver of version 13.0.2 has a more effective search strategy than version 12.0.3. In any event, it is clear that the algorithm can solve the reduced game quite quickly without need for any further improvements. 

To give an idea of the size and operation of the algorithm on the reduced game, we report values from the new run. The model input to Gurobi contained 72 rows and 249 columns with 168 nonzeros and 198 quadratic constraints. Gurobi's initial presolve method removed 45 rows and 57 columns. Gurobi then recognizes the model is nonconvex and reformulates it as a MIP. A second presolve procedure removes 36 rows and 88 columns. This new model has 1102 rows and 348 columns with 2360 nonzeros, 252 bilinear constraints, 348 continuous variables, and 0 integer variables. Gurobi then solved the model in 0.086 seconds, exploring 0 nodes in 0 simplex iterations. Interestingly no branching was needed, while in the prior run with version 12.03 32 cutting planes were used.

\subsection{Full 3-player Kuhn poker}
We are mainly interested in improving performance for solving the full version of 3-player Kuhn poker, since it was previously intractable. Using the prior approach~\cite{Ganzfried26:Quadratic}, the model input to Gurobi contained 51 rows and 249 columns with 147 nonzeros and 198 quadratic constraints. Note that we constructed the reduced game by simply constructing the full game and then adding equality constraints in the Gurobi model setting certain action sequence probabilities to 0. This explains why the reduced game model initially contains more rows than the full model. The initial presolve phase removed only 3 rows and 6 columns. After recognizing that the model is nonconvex and deciding to solve it as a MIP, the presolve phase then adds in 12 rows and 0 columns, followed by removing 0 rows and 9 columns. The final presolved model has 2112 rows and 637 columns with 4645 nonzeros, 492 bilinear constraints, 637 continuous variables, and 0 integer variables. The main algorithm run started after 132 iterations were performed for root relaxation.

Next we imposed the bounds on the slack variables from Corollary~\ref{co:slack}, while keeping all Gurobi parameters the same. This implementation was able to solve the problem to optimality in 1.160 seconds. The original model was the same (51 rows and 249 columns with 147 nonzeros and 198 quadratic constraints). Presolve initially removed 3 rows and 6 columns, as before.  After recognizing that the model is nonconvex and deciding to solve it as a MIP, presolve then adds 204 rows and 87 columns. The final presolved model has 1920 rows and 733 columns with 4741 nonzeros, 396 bilinear constraints, 637 continuous variables, and 96 integer (binary) variables. After 345 iterations of root relaxation, Gurobi solved the problem in 1.160 seconds, exploring 178 nodes in 21,104 simplex iterations. 228 cutting planes were used: 20 implied bound, 3 MIR, 3 flow cover, 2 flow path, 156 RLT, 2 relax-and-lift, 40 BQP, and 2 PSD. Note that the slack bound used from Corollary~\ref{co:slack} was 0.25, while the largest slack variable value in the optimal solution was 0.167.

We next imposed both the slack variable bounds from Corollary~\ref{co:slack} as well as the multiplier variable bounds from Corollary~\ref{co:lambda}, keeping the random seed and all other Gurobi parameters the same. Gurobi was able to compute an optimal solution in 3.299 seconds, which surprisingly was slower than only imposing bounds on the slack variables. Note that the multiplier bound used from Corollary~\ref{co:lambda} was 7.083, while the largest absolute value of a multiplier variable in the optimal solution was 0.542. Interestingly, when we imposed a tighter bound requiring all multiplier variables
to lie between $-1$ and $1$, the computation took 22.093 seconds. So imposing no bound on multiplier variables performed best, followed by the bound from Corollary~\ref{co:lambda}, followed by a stricter bound.

Finally, we imposed only the multiplier variable bounds from Corollary~\ref{co:lambda} without imposing the slack variable bounds. This approach found an optimal solution in 9.257 seconds. Imposing the tighter multiplier bounds of 1 vs. 7.083 found an optimal solution in 14.936 seconds.

Our results are summarized in Table~\ref{tab:results}. While imposing bounds on both the slack and multiplier variables separately led to significantly improved performance, the best performance was achieved by imposing bounds only on the slack variables and keeping the multiplier variables unconstrained. It is not surprising that bounding the slack variables proved to be more useful than bounding the multiplier variables, since the slack variables appear in quadratic constraints while the multiplier variables do not; however, it is surprising that imposing bounds on both sets of variables performed worse than just bounding the slack variables. Overall, our improvements enabled us to solve a problem that was previously intractable (could not be solved in 24 hours) in just 1.160 seconds. We note that only one of the algorithms from version 16.4 of the Gambit software suite~\cite{Savani25b:Gambit} was able to solve the full version of 3-player Kuhn poker. The command-line version of the `logit' approach, which computes the principal branch of the (logit) quantal response correspondence~\cite{Turocy10:Computing}, was able to solve the problem in 2 minutes and 34 seconds. Thus, our improved implementation of the NCP solver significantly outperformed all applicable approaches from the Gambit suite for this problem.

\begin{table}[!ht]
\centering
\caption{Impact of variable bound tightening on solve time in full 3-player Kuhn poker.}
\label{tab:results}
\begin{tabular}{|c|c|c|}
\hline
Slack Bounds & Multiplier Bounds & Solve Time \\
\hline
No & No & $>$ 24 hours \\
\hline
Yes &No &1.160 seconds \\
\hline
No &Yes &9.257 seconds \\
\hline
Yes &Yes &3.299 seconds \\
\hline
\end{tabular}
\end{table}

\section{Conclusion}
\label{se:conc}
Computing Nash equilibrium is a fundamental problem in computational game theory, and many important problems are naturally formulated as multiplayer imperfect-information games. Three-player Kuhn poker has been studied as a well-motivated game in this class. Out of the available algorithms from the Gambit software suite~\cite{Savani25b:Gambit}, only the logit quantal response approach~\cite{Turocy10:Computing} was able to successfully compute a Nash equilibrium in three-player Kuhn poker, requiring 2.5 minutes to do so. Recently an approach has been proposed that solves a quadratically constrained program based on a nonlinear complementarity problem (NCP) formulation derived from the sequence-form game representation~\cite{Ganzfried26:Quadratic}. This approach was not able to solve three-player Kuhn poker in 24 hours, though it is able to solve the reduced game after removal of dominated actions in 0.086 seconds. These results illustrate both the potential power of removal of dominated actions in imperfect-information games~\cite{Ganzfried25:Dominated}, as well as limitations of the NCP approach. As many games may not have a large number of dominated actions, we would still like to develop an improved exact Nash-equilibrium finding approach that can solve larger games without relying on removal of a large number of dominated actions.

A clear limitation of the NCP formulation is that the slack variables are unbounded above and the multiplier variables are unbounded in both directions. Tighter variable bounds can have a dramatic effect on the spatial branch-and-bound method used by solvers such as Gurobi. While no finite bounds are implied directly from the original problem formulation, we are able to derive nontrivial finite bounds for both the slack and multiplier variables by applying strong duality to the best response computation problems from which the NCP formulation is derived. We showed that applying our bounds to each class of variables led to a significant improvement in computation of Nash equilibrium in the full version of three-player Kuhn poker. The bounds for the slack variables are particularly valuable since they appear in quadratic constraints. We obtained best performance when we applied our bounds to the slack variables (while keeping the multiplier variables unconstrained). This approach solved the game in 1.160 seconds, dramatically outperforming the original NCP approach as well as all approaches from the Gambit suite.

\clearpage
\bibliographystyle{plain}
\bibliography{C://FromBackup/Research/refs/dairefs}

\begin{thebibliography}{1}

\bibitem{Ganzfried25:Dominated}
Sam Ganzfried.
\newblock Dominated actions in imperfect-information games, 2025.
\newblock arXiv:2504.09716 [cs.GT].

\bibitem{Ganzfried26:Quadratic}
Sam Ganzfried.
\newblock Quadratic programming approach for {N}ash equilibrium computation in
  multiplayer imperfect-information games.
\newblock {\em Games}, 17(1):9, 2026.

\bibitem{Gurobi26:Gurobi}
{Gurobi Optimization, LLC}.
\newblock {Gurobi Optimizer Reference Manual}, 2026.

\bibitem{Koller94:Fast}
Daphne Koller, Nimrod Megiddo, and Bernhard {von Stengel}.
\newblock Fast algorithms for finding randomized strategies in game trees.
\newblock In {\em Proceedings of the 26th {ACM} {S}ymposium on {T}heory of
  {C}omputing ({STOC})}, pages 750--760, 1994.

\bibitem{Lemke65:Bimatrix}
C.~E. Lemke.
\newblock Bimatrix equilibrium points and mathematical programming.
\newblock {\em Management Science}, 11(7):681--689, May 1965.

\bibitem{Lemke64:Equilibrium}
C.~E. Lemke and J.~T. Howson.
\newblock Equilibrium points of bimatrix games.
\newblock {\em Journal of the Society of Industrial and Applied Mathematics},
  12:413--423, 1964.

\bibitem{Savani25b:Gambit}
Rahul Savani and Theodore~L. Turocy.
\newblock {Gambit: The Package for Computation in Game Theory, {V}ersion
  16.4.0}, 2025.

\bibitem{Szafron13:Parametrized}
Duane Szafron, Richard Gibson, and Nathan Sturtevant.
\newblock A parameterized family of equilibrium profiles for three-player
  {K}uhn poker.
\newblock In {\em {Proceedings of the International Conference on Autonomous
  Agents and Multi-Agent Systems (AAMAS)}}, 2013.

\bibitem{Turocy10:Computing}
Theodore~L. Turocy.
\newblock {Computing Sequential Equilibria Using Agent Quantal Response
  Equilibria}.
\newblock {\em Economic Theory}, 42(1):255--269, January 2010.

\end{thebibliography}

\end{document}